\documentclass[conference, 10 pt]{ieeeconf}
\usepackage{times}
\usepackage[utf8]{inputenc}

\usepackage{amsmath}
\usepackage{amsfonts}
\usepackage{amssymb}
\usepackage{mathtools}
\usepackage{cite}
\usepackage[font=footnotesize]{caption}
\usepackage{multicol}
\usepackage[hidelinks]{hyperref}
\usepackage{algorithm}
\usepackage{amsthm}
\usepackage[]{graphicx}
\usepackage{subcaption}
\usepackage[table]{xcolor}
\usepackage{flushend}

\newtheorem{theorem}{Theorem}
\newtheorem{lemma}{Lemma}

\definecolor{lightgray}{gray}{0.9}

\newcommand{\figsize}{0.35\textwidth}
\newtheorem{problem}{Problem}

\IEEEoverridecommandlockouts 
\author{Diogo Almeida\textsuperscript{1} and Yiannis Karayiannidis\textsuperscript{2}
\thanks{\textsuperscript{1}Division of Robotics, Perception and Learning, KTH Royal Institute of Technology, SE-100 44 Stockholm, Sweden {\tt{diogoa@kth.se}} \newline
\indent\textsuperscript{2}Dept. of Electrical Eng., Chalmers University of Technology, SE-412 96 Gothenburg, Sweden {\tt{yiannis}@chalmers.se}\newline
This work is partially supported by the Swedish Foundation for
Strategic Research project GMT14-0082 FACT and by UNIFICATION, Vinnova, Produktion 2030.}
}

\title{A Lyapunov-Based Approach to Exploit Asymmetries in Robotic Dual-Arm Task Resolution}

\begin{document}
\maketitle
\begin{abstract}
	Dual-arm manipulation tasks can be prescribed to a robotic system in terms of desired absolute and relative motion of the robot's end-effectors.
	These can represent, e.g., jointly carrying a rigid object or performing an assembly task.
	When both types of motion are to be executed concurrently, the symmetric distribution of the relative motion between arms prevents task conflicts.
	Conversely, an asymmetric solution to the relative motion task will result in conflicts with the absolute task.
	In this work, we address the problem of designing a control law for the absolute motion task together with updating the distribution of the relative task among arms.
	Through a set of numerical results, we contrast our approach with the classical symmetric distribution of the relative motion task to illustrate the advantages of our method.
\end{abstract}
\IEEEpeerreviewmaketitle

\section{Introduction}
Dual-armed robots are able to overcome payload or dexterity limits of a single arm. 
These limits are addressed by leveraging the potential of cooperative motions of the two arms in the robot system, as two arms are able to jointly carry an heavy object, or perform complex assembly tasks independently of environmental fixtures, to name two examples.

Dual-armed robotic manipulation tasks can be broadly divided into two categories: \textit{relative} and \textit{absolute} tasks.
Relative tasks can be described solely through a relative motion of the robot's end-effectors. 
These include challenges such as the assembly of two components \cite{Ajoudani2014, Almeida2016, Almeida2016b, Stavridis2018}, drawing \cite{Lee2013, Lee2015} or machining\cite{Lee2012, Owen2005, Owen2008}.
Absolute tasks, conversely, consist in problems that can be solved by assuming a rigid connection between the robot end-effectors. 
Some examples of absolute motion tasks are the cooperative manipulation of rigid objects, tools and mechanisms in the robot's environment \cite{Caccavale2008, Heck2013, Wang2015, Erhart2015, Ajoudani2014b}.

Initial work on dual-arm manipulation emphasized solutions based on master-slave approaches \cite{Uchiyama1987}. 
Relative motion tasks can be solved by adopting such formulations, where one arm, the master, is tasked with executing the gross motion required to solve the problem, while the slave arm adopts a passive role, such as complying to contact forces.
Absolute motion tasks can be realized by requiring the slave arm's end-effector to keep a constant relative pose w.r.t the master.

\begin{figure}
	\centering
	\begin{subfigure}{\figsize}
		\includegraphics[width = \textwidth]{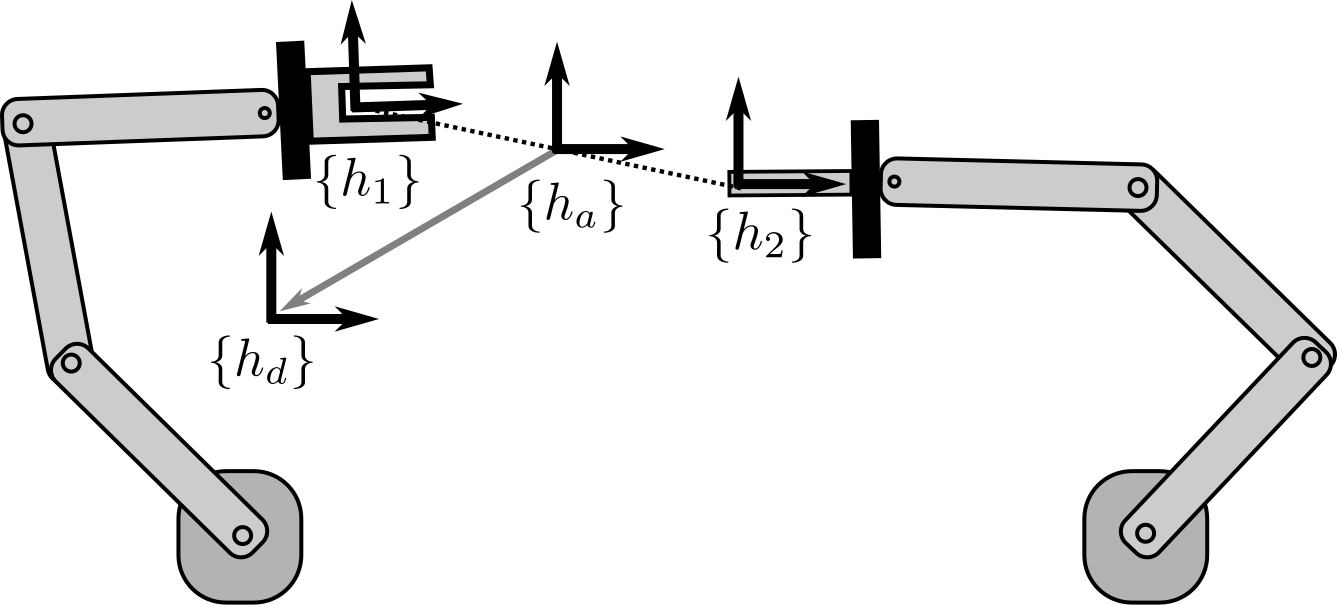}
	\end{subfigure}
	\caption{Illustration of a robotic dual-arm cooperative task. If the relative velocity of the system is defined along the dotted line and the absolute motion is aligned with the gray arrow, an \textit{asymmetric} distribution of the relative velocity between the manipulators will disturb the absolute motion task. \label{diagram}}
\end{figure}
Alternatively to a master-slave approach, the Cooperative Task Space (CTS) \cite{Chiacchio1996} formulation specifically defines an absolute and a relative motion space along which the two types of tasks can be specified.
This choice is based on earlier cooperative solutions, which rely on the assumption that a rigid object is being jointly held by the dual-armed system \cite{Uchiyama1988, Chiacchio1989, Chiacchio1992}. This results in a \textit{symmetric} solution to the relative motion task, i.e., each arm contributes equally to the relative motion.
However, in CTS, the assumption of grasping a rigid object is no longer a requirement to derive the necessary task-space relationships.

Recent work has proposed an Extended CTS (ECTS) \cite{Park2015}, which enables an asymmetric execution of the relative task, i.e., the user can specify, through a parameter, the degree to which each arm contributes to the relative motion task.
This degree of cooperation is a convenient expression of the redundancy of the relative task: a master-slave solution can be obtained by assigning the relative task entirely to one of the arms, while symmetric behavior results from an equal distribution of the relative task among arms. 
This can be exploited, e.g., to optimize the performance of the system when one of the arms is in a less dexterous configuration~\cite{Park2016}.

In this work, we design a control solution to the absolute motion task which leverages induced asymmetries on the execution of the relative task.
We note that it is not possible to asymmetrically execute a desired relative motion without affecting the absolute task of the system.
This observation can be used as the basis to jointly design a controller for the absolute motion task and an update law for the parameter which governs the arms' cooperation in the execution of the relative motion task.
This is a novel approach to the problem of the cooperative control of dual-armed robotic systems.
We present a case study focusing on the linear component of the problem, section \ref{case_study}, and address the full dual-arm manipulation problem in section \ref{dual_arm}.
Throughout our article, we compare our approach with the classical CTS formulation in a sequence of numerical examples.

\section{Preliminaries}
Consider a dual-armed robotic system, composed by two robotic manipulators.
Each manipulator consists of a set of links, connected through $n_i \in \mathbb{N}$ generalized (i.e., revolute or prismatic) joints $\mathbf{q}_i \in \mathbb{R}^{n_i}$, where $i \in \{1, 2\}$ identifies the manipulator.
The pose of each manipulators' end-effector can be represented by a frame $\{h_i\}$, depicted in Fig. \ref{diagram}, which contains translational $\mathbf{p}_i \in \mathbb{R}^3$ and rotational $\mathbf{R}_i \in SO(3)$ components. 
Let $\mathbf{v}_i \in \mathbb{R}^6$ denote the cartesian twist of the $i$-th manipulator. 
The relationship between cartesian twists and joint velocities $\dot{\mathbf{q}}_i$ is given by 
\begin{equation}
	\mathbf{v}_i = \mathbf{J}_i(\mathbf{q}_i) \dot{\mathbf{q}}_i,
\end{equation}
where $\mathbf{J}_i(\mathbf{q}_i) \in \mathbb{R}^{6 \times n_i}$ is the Jacobian matrix of the $i$-th manipulator and $\mathbf{v}_i = [\dot{\mathbf{p}}_i^\top, \boldsymbol{\omega}_i^\top]^\top$, where $\dot{\mathbf{p}}_i, \boldsymbol{\omega}_i \in \mathbb{R}^3$ are, respectively, the linear and angular velocity parts of the twist.
For the following, we will assume that each manipulator has $n_i \geq 6$, such that each arm has at least as many degrees-of-freedom as its task space.

\subsection{Cooperative Task Space}
The CTS formulation defines two motion frames, the \textit{absolute} motion frame, $\{h_a\}$, and the \textit{relative} motion frame, $\{h_r\}$, such that a desired relative motion of the end-effectors can be specified in $\{h_r\}$ and, conversely, an absolute (i.e., common) motion can be prescribed in $\{h_a\}$. 
The position and orientation of $\{h_a\}$ and $\{h_r\}$ are given by
	\begin{align}
		&\mathbf{p}_a = \frac{1}{2}(\mathbf{p}_1 + \mathbf{p}_2)&  &\mathbf{R}_a = \mathbf{R}_1 \mathbf{R}_{\mathbf{k}_{1, 2}}\left (\frac{\vartheta_{1, 2}}{2} \right) \label{absolute_frame}\\
		&\mathbf{p}_r = \mathbf{p}_2 - \mathbf{p}_1& &\mathbf{R}_r = \mathbf{R}_1^\top \mathbf{R}_2, \label{relative_frame}
	\end{align}
where $\mathbf{R}_{\mathbf{k}_{1, 2}}\left (\vartheta_{1, 2} \right)$ denotes the angle-axis representation of $\mathbf{R}_r$, with $\vartheta_{1, 2} \in \mathbb{R}$ being the angle one needs to rotate $\mathbf{R}_2$ about the axis $\mathbf{k}_{1, 2} \in \mathbb{R}^3$ to obtain $\mathbf{R}_1$.
Note that the time derivatives of \eqref{absolute_frame} and \eqref{relative_frame} lead to the following relationship between task-space twists \cite{Chiacchio1996},
\begin{equation}
	\begin{bmatrix}
		\mathbf{v}_a\\
		\mathbf{v}_r
	\end{bmatrix} = \begin{bmatrix}
		\frac{1}{2}\mathbf{I}_6 & \frac{1}{2} \mathbf{I}_6\\
		-\mathbf{I}_6 & \mathbf{I}_6
	\end{bmatrix} \begin{bmatrix}
		\mathbf{v}_1\\
		\mathbf{v}_2
	\end{bmatrix},
	\label{cts_relationship}
\end{equation}
where $\mathbf{I}_n$ denotes the $n$-dimensional identity matrix and $\mathbf{v}_a = [\dot{\mathbf{p}}_a^\top, \boldsymbol{\omega}_a^\top]^\top$ and $\mathbf{v}_r = [\dot{\mathbf{p}}_r^\top, \boldsymbol{\omega}_r^\top]^\top$ are, respectively, the absolute and relative motion twists in the CTS formulation.
The inverse relationship is straightforward to obtain,
\begin{equation}
	\begin{bmatrix}
		\mathbf{v}_1\\
		\mathbf{v}_2
	\end{bmatrix} = \begin{bmatrix}
		\mathbf{I}_6 &  -\frac{1}{2}\mathbf{I}_6\\
		\mathbf{I}_6 & \frac{1}{2} \mathbf{I}_6
	\end{bmatrix} \begin{bmatrix}
		\mathbf{v}_{a}\\
		\mathbf{v}_{r}
	\end{bmatrix}.
	\label{cts_inv}
\end{equation}
In CTS, the user specifies desired absolute and relative velocities, respectively $\mathbf{v}_{a_d}$ and $\mathbf{v}_{r_d}$, which are resolved to end-effector velocities through \eqref{cts_inv}.
This is in contrast to master-slave approaches, which assign the execution of the task to one end-effector, with the other adopting a passive stance, such as regulating contact forces.
The CTS fomulation results in a \textit{symmetric} solution of the dual-arm relative motion.
Both end-effectors will contribute equally to the relative motion, as $\mathbf{v}_{r}$ is distributed in the same proportion among the arms.

\subsection{Asymmetric relative motion}
In addition to the master-slave or symmetric solution to the relative motion problem, we can consider \textit{blended}, or asymmetric, modes of cooperation.
The ECTS formulation in particular redefines $\{h_a\}$ as a weighted version of \eqref{absolute_frame}, $\{h_{a_E}\}$, such that
\begin{equation}
	\begin{aligned}
		\mathbf{p}_{a_E} &= \alpha\mathbf{p}_1 + (1 - \alpha)\mathbf{p}_2\\
		\mathbf{R}_{a_E} &= \mathbf{R}_1 \mathbf{R}_{\mathbf{k}_{1, 2}}\left((1 - \alpha)\vartheta_{1,2} \right),
	\end{aligned}
	\label{extended_absolute_frame}
\end{equation}
where $\alpha \in D_\alpha  \triangleq \{x \in \mathbb{R} : 0 \leq x \leq 1\}$.
Under the new definition \eqref{extended_absolute_frame}, the relationship between task-space velocities is given by
\begin{equation}
	\begin{bmatrix}
		\mathbf{v}_{a_E}\\
		\mathbf{v}_r
	\end{bmatrix} = \begin{bmatrix}
		\alpha \mathbf{I}_6 & (1 - \alpha) \mathbf{I}_6\\
		-\mathbf{I}_6 & \mathbf{I}_6
	\end{bmatrix} \begin{bmatrix}
		\mathbf{v}_1\\
		\mathbf{v}_2
	\end{bmatrix},
\end{equation}
and the inverse relationship illustrates the effect of the cooperation parameter $\alpha$ on the resolution of the system's relative motion,
\begin{equation}
	\begin{bmatrix}
		\mathbf{v}_{1}\\
		\mathbf{v}_{2}
	\end{bmatrix} = \begin{bmatrix}
		\mathbf{I}_6 & -(1 - \alpha)\mathbf{I}_6\\
		\mathbf{I}_6 & \alpha \mathbf{I}_6
	\end{bmatrix} \begin{bmatrix}
		\mathbf{v}_{a_E}\\
		\mathbf{v}_{r}
	\end{bmatrix}.
	\label{ects_solution}
\end{equation}
In particular, $\alpha = 1$ or $\alpha = 0$ denotes a master-slave approach to the relative motion task, with different end-effectors adopting the role of master, while $\alpha = 0.5$ results in the symmetric approach. 
Other values of $\alpha$ affect the degree to which each arm cooperates on the resolution of the relative motion task and compose the blended mode of cooperation.

\section{Problem description \label{problem_statement}}
The ECTS solution \eqref{ects_solution} entails that, for $\alpha \neq 0.5$, $\mathbf{v}_{r}$ will act as a disturbance to the symmetric absolute motion, described in $\{h_a\}$, eq.~\eqref{absolute_frame}. 
Indeed, we have
\begin{equation}
	\mathbf{v}_a = \frac{1}{2} (\mathbf{v}_1 + \mathbf{v}_2) = (\alpha - 0.5) \mathbf{v}_{r} + \mathbf{v}_{a_E}.
	\label{vr_disturbance}
\end{equation}
This highlights the conflict between the asymmetric resolution of a relative motion task and the control of the symmetric absolute motion of the cooperative system.
Consider as an example a robot equipped with pruning shears, tasked with pruning plants on a garden. 
If the relative motion (i.e., operating the shears) is solved asymmetrically, this will affect the absolute task (i.e., moving the shears on the robot's workspace).

\subsection{Decomposing the linear and angular motion}
Note that the linear and angular terms of the relative twist $\mathbf{v}_{r}$ only impact the respective linear and angular components of $\mathbf{v}_a$ in \eqref{vr_disturbance}. 
Therefore, we will define different cooperation parameters, $\alpha_p, \alpha_{\omega} \in \mathbb{R}$ to, respectively, set the arms' cooperation degree on the translational and rotational tasks.
Let
\begin{equation}
	\boldsymbol{\Lambda}(\alpha_p, \alpha_\omega) \triangleq \begin{bmatrix}
		\alpha_p \mathbf{I}_3 & \mathbf{0}\\
		\mathbf{0} & \alpha_\omega \mathbf{I}_3
	\end{bmatrix}.
\end{equation}
In the remaining of this text, we will assume that an assigned cooperative task $\mathbf{v}_{a_d}, \mathbf{v}_{r_d}$ is resolved into desired end-effector twists $\mathbf{v}_{1_d}, \mathbf{v}_{2_d}$ as
\begin{equation}
	\begin{bmatrix}
		\mathbf{v}_{1_d}\\
		\mathbf{v}_{2_d}
	\end{bmatrix} = \begin{bmatrix}
		\mathbf{I}_6 & -(\mathbf{I}_6 - \boldsymbol{\Lambda}(\alpha_p, \alpha_\omega) )\\
		\mathbf{I}_6 & \boldsymbol{\Lambda}(\alpha_p, \alpha_\omega) 
	\end{bmatrix} \begin{bmatrix}
		\mathbf{v}_{a_{d}}\\
		\mathbf{v}_{r_d}
	\end{bmatrix}.
	\label{two_alphas_res}
\end{equation}
Now, consider the problem of regulating the absolute frame \eqref{absolute_frame} of the cooperative system by assigning desired twists, $\mathbf{v}_{i_d}$, to each frame $\{h_i\}$, $i \in \{1, 2\}$,
\begin{problem}[Cooperative absolute motion]
	Let $\{h_{d}\}$ be the desired absolute motion frame for the cooperative system \eqref{absolute_frame}-\eqref{relative_frame}, which we will assume to be a constant reference, i.e., $\dot{\mathbf{p}}_d = \boldsymbol{\omega}_d = \mathbf{0}$.
	Additionally, consider the unit quaternion $\mathcal{Q}_a = \{\eta_a, \boldsymbol{\epsilon}_a\}$ as a representation of the absolute orientation $\mathbf{R}_a$, and the quaternion $\mathcal{Q}_d$ as the representation of a desired absolute orientation of the cooperative system. 	
	We define the absolute position error as $\tilde{\mathbf{p}}_a = \mathbf{p}_{d} - \mathbf{p}_a$, and the absolute orientation error as $\tilde{\mathcal{Q}}_a = \mathcal{Q}_d * \mathcal{Q}_a^{-1} = \{\tilde{\eta}_a, \tilde{\boldsymbol{\epsilon}}_a\}$. 
	Given a desired relative motion between the end-effectors, $\mathbf{v}_{r_d} = [\dot{\mathbf{p}}_{r_d}^\top, \boldsymbol{\omega}_{r_d}^\top]^\top$, we aim to prescribe desired velocities $\mathbf{v}_{1_d}$ and $\mathbf{v}_{2_d}$	such that $[\tilde{\mathbf{p}}_a^\top, \tilde{\boldsymbol{\epsilon}}_a^\top]^\top = \mathbf{0}$ is asymptotically stable.
	\label{cooperative_problem}
\end{problem}
Since our formulation \eqref{two_alphas_res} includes the symmetric solution, i.e., CTS, as the particular case of $\alpha_p = \alpha_\omega = 0.5$, we will only consider \eqref{two_alphas_res} in the remaining text.
Let $\mathbf{v}_{a_d} = [\dot{\mathbf{p}}_{a_d}^\top, \boldsymbol{\omega}_{a_d}^\top]^\top$ be the commanded absolute twist to the cooperative system \eqref{two_alphas_res}.
The dynamics of the absolute position error are given by
\begin{equation}
	\dot{\tilde{\mathbf{p}}}_a = -\dot{\mathbf{p}}_a = (0.5 - \alpha_p)\dot{\mathbf{p}}_{r_d} - \dot{\mathbf{p}}_{a_d},
	\label{pos_error_dyn}
\end{equation}
while the evolution of the absolute orientation error is given by the quaternion propagation equation \cite[p. 140]{Siciliano2008},
\begin{equation}
	\begin{aligned}
        \dot{\tilde{\eta}}_a &= -\frac{1}{2}\tilde{\boldsymbol{\epsilon}}_a^\top \boldsymbol{\omega}_{a}\\
        \dot{\tilde{\boldsymbol{\epsilon}}}_a &= \frac{1}{2}(\tilde{\eta}_a\mathbf{I}_3 - \mathbf{S}(\tilde{\boldsymbol{\epsilon}}_a))\boldsymbol{\omega}_{a},
    \end{aligned}
    \label{abs_ori_prop}
\end{equation}
where $\mathbf{S}(\boldsymbol{\epsilon})$ denotes the skew-symmetric matrix such that $\mathbf{S}(\boldsymbol{\epsilon})\boldsymbol{\omega} = \boldsymbol{\epsilon} \times \boldsymbol{\omega}$.
Note as well that $\boldsymbol{\omega}_a$ consists of the angular part of \eqref{vr_disturbance},
\begin{equation}
	\boldsymbol{\omega}_a = (\alpha_\omega - 0.5)\boldsymbol{\omega}_{r_d} + \boldsymbol{\omega}_{a_d}.
	\label{angular_vel}
\end{equation}
A possible solution to Problem \ref{cooperative_problem} is to set $\alpha_p = \alpha_\omega = 0.5$ and adopt the feedback control law \cite{Caccavale2000}
\begin{equation}
	\mathbf{v}_{a_d} = \begin{bmatrix}
		\mathbf{K}_p \tilde{\mathbf{p}}_a\\
		\mathbf{K}_\omega \tilde{\boldsymbol{\epsilon}}_a
	\end{bmatrix},
\end{equation}
with $\mathbf{K}_p, \mathbf{K}_\omega \in \mathbb{R}^{3\times 3}$ being positive definite.

Alternatively, we can assume that $\alpha_p$ and $\alpha_\omega$ are time-varying parameters, i.e., $\alpha_p \triangleq \alpha_p(t)$ and $\alpha_\omega \triangleq \alpha_\omega(t)$, and expand the state of \eqref{pos_error_dyn} and \eqref{abs_ori_prop} to include the cooperation parameters, with appropriately designed dynamics
\begin{equation}
	\dot{\alpha}_p = f_p(\tilde{\mathbf{p}}_a, \alpha_p), \qquad \dot{\alpha}_\omega = f_\omega(\tilde{\mathcal{Q}}_a, \alpha_\omega).
	\label{dynamic_alphas}
\end{equation}
For the joint system with dynamics given by \eqref{pos_error_dyn}, \eqref{abs_ori_prop} and \eqref{dynamic_alphas}, we define the desired equilibrium as
\begin{align}
	\tilde{\mathbf{p}}_a = \mathbf{0} & \quad\wedge \quad\alpha_p = 0.5 \label{pos_equi}\\
	\tilde{\mathcal{Q}}_a = \{1, \mathbf{0}\} & \quad\wedge \quad\alpha_\omega = 0.5 \label{ang_equi}.
\end{align}
In the following, we will assume that the relative velocity $\mathbf{v}_{r_d}$ is a known and bounded quantity, and we will show that by designing $f_p(\tilde{\mathbf{p}_a}, \alpha_p)$ and $f_\omega(\tilde{\mathcal{Q}}_a, \alpha_\omega)$ in conjunction with control laws for $\dot{\mathbf{p}}_{a_d}$ and $\boldsymbol{\omega}_{a_d}$, respectively, we are able to leverage an asymmetric execution to the relative motion task in the design of a solution to Problem \ref{cooperative_problem}.

\section{Case study: Coordinating two points \label{case_study}}
We will first consider the coordination of two $n$-dimensional points $\mathbf{p}_i \in \mathbb{R}^n$.
Each point is assumed to move holonomically.
In general, the dynamics of the two-point system are given by
\begin{equation}
	\begin{bmatrix}
		\dot{\mathbf{p}}_{1}\\
		\dot{\mathbf{p}}_{2}
	\end{bmatrix} = \begin{bmatrix}
		\mathbf{I}_n & -(1 - \alpha_p)\mathbf{I}_n\\
		\mathbf{I}_n & \alpha_p \mathbf{I}_n
	\end{bmatrix} \begin{bmatrix}
		\dot{\mathbf{p}}_{a_d}\\
		\dot{\mathbf{p}}_{r_d}
	\end{bmatrix}.
	\label{point_system}
\end{equation}

\begin{problem}[Regulation of the average position]
	Let $\mathbf{p}_{d} \in \mathbb{R}^n$ be a desired average position for the two-point system \eqref{point_system}.
	Assuming that $\dot{\mathbf{p}}_{r_d}$ is known, we want to find point velocities $\dot{\mathbf{p}}_{i_d}$ such that the error $\tilde{\mathbf{p}}_a = \mathbf{p}_d - \mathbf{p}_a$, with dynamics given by \eqref{pos_error_dyn}, asymptotically reaches $\mathbf{0}$.
	\label{two_point_problem}
\end{problem}
Note that, when $n = 3$, Problem \ref{two_point_problem} is equivalent to the translational part of Problem \ref{cooperative_problem}, i.e., the problem of regulating the absolute position of a dual-arm manipulation system.

\subsection{Compensating for the effect of asymmetries}
Let $\alpha_p$ be a constant, i.e., $f_p(\tilde{\mathbf{p}}_a, \alpha_p) = 0$, and $V(\tilde{\mathbf{p}}_a) = \frac{1}{2}\tilde{\mathbf{p}}_a^\top\tilde{\mathbf{p}}_a $ be a Lyapunov function candidate.
If we take the time derivative of $V(\tilde{\mathbf{p}}_a)$ along the trajectories of the system \eqref{pos_error_dyn} we get
\begin{equation}
	\dot{V}(\tilde{\mathbf{p}}_a ) =  -\tilde{\mathbf{p}}_a^\top \left((\alpha_p - 0.5)\dot{\mathbf{p}}_{r_d} + \dot{\mathbf{p}}_{a_d}\right),
\end{equation}
and we can design the  absolute control law as
\begin{equation}
	\dot{\mathbf{p}}_{a_d} = -(\alpha_p - 0.5)\dot{\mathbf{p}}_{r_d} + \mathbf{K}_p\tilde{\mathbf{p}}_a,
	\label{trivial}
\end{equation}
with $\mathbf{K}_p \in \mathbb{R}^{n \times n}$ being a positive definite gain matrix, yielding $\dot{V}(\tilde{\mathbf{p}}_a) = -\tilde{\mathbf{p}}_a^\top\mathbf{K}_p \tilde{\mathbf{p}}_a \leq 0$, which is negative definite.
Additionally, given that there exists a $\lambda > 0$ such that $\dot{V}(\tilde{\mathbf{p}}_a) \leq -\lambda V(\tilde{\mathbf{p}}_a)$, the exponential convergence of the absolute position error $\tilde{\mathbf{p}}_a \rightarrow \mathbf{0}$ is guaranteed.
The solution \eqref{trivial} compensates for the effects of the asymmetric resolution of $\dot{\mathbf{p}}_{r_d}$ in \eqref{pos_error_dyn}.
However, this results in a symmetric solution to the relative motion task, thus cancelling out the effect of $\alpha_p$. 
In fact, under the control law \eqref{trivial}, the point system \eqref{point_system} will have dynamics
\begin{equation}
	\begin{bmatrix}
		\dot{\mathbf{p}}_1\\
		\dot{\mathbf{p}}_2
	\end{bmatrix} = \begin{bmatrix}
		\mathbf{I}_n & -\frac{1}{2}\mathbf{I}_n\\
		\mathbf{I}_n & \frac{1}{2}\mathbf{I}_n
	\end{bmatrix} \begin{bmatrix}
		\mathbf{K}_p\tilde{\mathbf{p}}_{a}\\
		\dot{\mathbf{p}}_{r_d}
	\end{bmatrix},
\end{equation}
which corresponds to the CTS formulation for the two-point system.

\subsection{Inducing asymmetries}
The solution \eqref{trivial} is analogous to a feedback linearization of \eqref{vr_disturbance}, as we `linearize' $\dot{\mathbf{p}}_{a_d}$ by compensating the disturbance induced by the constant $\alpha_p \neq 0.5$.
We can try instead to adjust $\alpha_p$ through an update rule $\dot{\alpha}_p = f_p(\tilde{\mathbf{p}}_a, \alpha_p)$.
\begin{theorem}
	The absolute motion feedback law
	\begin{equation}
		\dot{\mathbf{p}}_{a_d} = \mathbf{K}_p\tilde{\mathbf{p}}_a,	
		\label{pa_law}
	\end{equation}
	together with the following update law for the cooperation parameter $\alpha_p$,
	\begin{equation}
		f_p(\tilde{\mathbf{p}}_a, \alpha_p) = \gamma_p (\tilde{\mathbf{p}}_a^\top\dot{\mathbf{p}}_{r_d} - (\alpha_p - 0.5)(k_p + |\tilde{\mathbf{p}}_a^\top\dot{\mathbf{p}}_{r_d}|)),
		\label{alpha_law_1}
	\end{equation}
	with $\gamma_p, k_p > 0$, applied to \eqref{pos_error_dyn}, ensures the exponential stability of \eqref{pos_equi}. 	\label{pos_theo}
	\end{theorem}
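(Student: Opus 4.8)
The plan is to merge the position-error dynamics and the parameter update law into a single closed-loop system and exhibit one quadratic Lyapunov function whose derivative is bounded above by a negative multiple of itself. First I would substitute the feedback law \eqref{pa_law} into \eqref{pos_error_dyn}, obtaining $\dot{\tilde{\mathbf{p}}}_a = -(\alpha_p-0.5)\dot{\mathbf{p}}_{r_d} - \mathbf{K}_p\tilde{\mathbf{p}}_a$, and introduce the shifted parameter $\beta \triangleq \alpha_p - 0.5$, so that the target equilibrium \eqref{pos_equi} becomes the origin $(\tilde{\mathbf{p}}_a,\beta)=(\mathbf{0},0)$ and, from \eqref{alpha_law_1}, $\dot{\beta} = \gamma_p\big(\tilde{\mathbf{p}}_a^\top\dot{\mathbf{p}}_{r_d} - \beta(k_p + |\tilde{\mathbf{p}}_a^\top\dot{\mathbf{p}}_{r_d}|)\big)$. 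Here $\dot{\mathbf{p}}_{r_d}$ enters only as a known, bounded exogenous signal, so the closed loop is time-varying and I would aim for an estimate that is uniform in it.

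Next I would take $W(\tilde{\mathbf{p}}_a,\beta) = \tfrac{1}{2}\tilde{\mathbf{p}}_a^\top\tilde{\mathbf{p}}_a + \tfrac{1}{2\gamma_p}\beta^2$, which is positive definite and radially unbounded with the evident quadratic sandwich bounds, the weight $1/\gamma_p$ being chosen precisely to absorb the gain $\gamma_p$ appearing in $\dot\beta$. Differentiating along the closed loop gives $\dot{W} = -\beta\tilde{\mathbf{p}}_a^\top\dot{\mathbf{p}}_{r_d} - \tilde{\mathbf{p}}_a^\top\mathbf{K}_p\tilde{\mathbf{p}}_a + \beta\big(\tilde{\mathbf{p}}_a^\top\dot{\mathbf{p}}_{r_d} - \beta(k_p+|\tilde{\mathbf{p}}_a^\top\dot{\mathbf{p}}_{r_d}|)\big)$; the two $\beta\tilde{\mathbf{p}}_a^\top\dot{\mathbf{p}}_{r_d}$ terms cancel identically --- this cancellation is exactly what the update law \eqref{alpha_law_1} is engineered to produce --- leaving $\dot{W} = -\tilde{\mathbf{p}}_a^\top\mathbf{K}_p\tilde{\mathbf{p}}_a - \beta^2\big(k_p+|\tilde{\mathbf{p}}_a^\top\dot{\mathbf{p}}_{r_d}|\big)$. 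Using positive definiteness of $\mathbf{K}_p$ and discarding the nonnegative $|\tilde{\mathbf{p}}_a^\top\dot{\mathbf{p}}_{r_d}|$ contribution yields $\dot{W} \leq -\lambda_{\min}(\mathbf{K}_p)\|\tilde{\mathbf{p}}_a\|^2 - k_p\beta^2$.

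Then I would relate this back to $W$ itself: since $\lambda_{\min}(\mathbf{K}_p)\|\tilde{\mathbf{p}}_a\|^2 = 2\lambda_{\min}(\mathbf{K}_p)\cdot\tfrac{1}{2}\|\tilde{\mathbf{p}}_a\|^2$ and $k_p\beta^2 = 2\gamma_p k_p\cdot\tfrac{1}{2\gamma_p}\beta^2$, we obtain $\dot{W} \leq -c\,W$ with $c = \min\{2\lambda_{\min}(\mathbf{K}_p),\,2\gamma_p k_p\} > 0$. The comparison lemma gives $W(t) \leq W(0)e^{-ct}$, and the quadratic lower bound on $W$ converts this into exponential decay of both $\|\tilde{\mathbf{p}}_a(t)\|$ and $|\alpha_p(t)-0.5|$ at rate $c/2$, uniformly in $\dot{\mathbf{p}}_{r_d}$ and from arbitrary initial data; this is precisely the asserted exponential stability of \eqref{pos_equi}.

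I do not anticipate a genuine obstacle: once the shifted coordinate $\beta$ and the matching weight $1/\gamma_p$ are in place, the cross terms cancel and the remainder is routine. The one point that deserves a remark is that the statement asserts exponential stability of \eqref{pos_equi} but says nothing about forward invariance of the feasibility set $\alpha_p\in D_\alpha$; if one additionally wants $\alpha_p(t)\in[0,1]$ for all $t$, the argument extends by restricting attention to the sublevel set $\{W \leq 1/(8\gamma_p)\}$, which is invariant since $\dot W \leq 0$ and on which $|\beta|\leq 1/2$ --- in effect a smallness condition on $\gamma_p$ relative to the initial error. The absolute-value term in \eqref{alpha_law_1} only strengthens the bound above (it supplies extra damping on $\beta$ proportional to $|\tilde{\mathbf{p}}_a^\top\dot{\mathbf{p}}_{r_d}|$) and is not required for the exponential-stability conclusion per se.
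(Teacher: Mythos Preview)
Your proof is correct and follows essentially the same approach as the paper: your Lyapunov function $W(\tilde{\mathbf{p}}_a,\beta)=\tfrac12\|\tilde{\mathbf{p}}_a\|^2+\tfrac{1}{2\gamma_p}\beta^2$ is exactly the paper's $V=V_a+\tfrac{1}{\gamma_p}V_{\text{asym}}$ written in the shifted coordinate $\beta=\alpha_p-0.5$, and your derivative computation reproduces the paper's $\dot V = -(k_p+|\tilde{\mathbf{p}}_a^\top\dot{\mathbf{p}}_{r_d}|)(\alpha_p-0.5)^2-\tilde{\mathbf{p}}_a^\top\mathbf{K}_p\tilde{\mathbf{p}}_a$. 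The only difference is that you spell out the explicit exponential rate $c=\min\{2\lambda_{\min}(\mathbf{K}_p),2\gamma_p k_p\}$ and add the (correct, but not required by the theorem) remark on invariance of $D_\alpha$, whereas the paper simply asserts the existence of such a $\lambda$.
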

	\begin{proof}
		We define the Lyapunov candidate function
		\begin{equation}
			V(\tilde{\mathbf{p}}_a, \alpha) = V_a(\tilde{\mathbf{p}}_a) + \frac{1}{\gamma_p}V_\text{asym}(\alpha_p),
			\label{lyap_1}
		\end{equation}
		such that
		\begin{align}
			V_a(\tilde{\mathbf{p}}_a) &= \frac{1}{2} \tilde{\mathbf{p}}_a^\top\tilde{\mathbf{p}}_a\label{pa_lyap}\\
			V_\text{asym}(\alpha) &= \frac{1}{2}(\alpha - 0.5)^2 \label{quadratic_alpha}.
		\end{align}
		Under the control law \eqref{pa_law} and update rule \eqref{alpha_law_1}, the time derivative of \eqref{lyap_1} along the trajectories of \eqref{pos_error_dyn} yields $\dot{V}(\tilde{\mathbf{p}}_a, \alpha_p) = - (k_{p} + |\tilde{\mathbf{p}}_a^\top\dot{\mathbf{p}}_{r_d}|) (\alpha_p - 0.5)^2 - \tilde{\mathbf{p}}_a^\top\mathbf{K}_a \tilde{\mathbf{p}}_a \leq 0$, which is negative definite.
		Furthermore, there exists $\lambda > 0$ such that $\dot{V}(\tilde{\mathbf{p}}_a, \alpha_p) \leq -\lambda V(\tilde{\mathbf{p}}_a, \alpha_p)$ and thus the equilibrium point \eqref{pos_equi} is exponentially stable.
	\end{proof}

\subsection{Constraining the degree of cooperation}
The solution \eqref{alpha_law_1} can lead to $\alpha_p \not \in D_\alpha$ and thus the original semantics of the cooperation parameter can be lost. 
Alternatively, we can constrain $\alpha_p$ by using logarithmic barrier functions.

\begin{lemma}
	Let $\alpha_p(t_0) = 0.5$, for some arbitrary initial time $t = t_0$. Then, the update rule
	\begin{equation}
			\begin{aligned}
				f_p(\tilde{\mathbf{p}}_a, \alpha_p) = \gamma_p &\alpha_p (1 - \alpha_p) (\tilde{\mathbf{p}}_a^\top\dot{\mathbf{p}}_{r_d} \\
				&- (\alpha_p - 0.5)(k_{p} + |\tilde{\mathbf{p}}_a^\top\dot{\mathbf{p}}_{r_d}|)),
			\end{aligned}
			\label{alpha_law_2}
	\end{equation}
	together with the control law \eqref{pa_law}, ensures the asymptotic stability of \eqref{pos_equi}.
	Additionally, \eqref{alpha_law_2} guarantees $\alpha_p \in D_\alpha$ $\forall t \geq t_0$.
		\label{pos_barrier_lemma}
\end{lemma}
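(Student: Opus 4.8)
The plan is to reuse the architecture of the proof of Theorem~\ref{pos_theo}, but to replace the quadratic penalty $V_\text{asym}$ by a logarithmic barrier matched to the extra factor $\alpha_p(1-\alpha_p)$ that multiplies \eqref{alpha_law_1} inside \eqref{alpha_law_2}. Concretely, I would propose the candidate
\[
    V(\tilde{\mathbf{p}}_a, \alpha_p) = \tfrac{1}{2}\tilde{\mathbf{p}}_a^\top\tilde{\mathbf{p}}_a - \tfrac{1}{2\gamma_p}\ln\!\big(4\alpha_p(1-\alpha_p)\big),
\]
which is $C^1$ and positive on $\mathbb{R}^n\times(0,1)$, vanishes exactly at \eqref{pos_equi}, and diverges to $+\infty$ as $\alpha_p\to 0^+$ or $\alpha_p\to 1^-$. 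The motivation for this particular barrier is that $W(\alpha_p)\triangleq -\tfrac{1}{2}\ln(4\alpha_p(1-\alpha_p))$ satisfies $W'(\alpha_p) = (\alpha_p-0.5)/(\alpha_p(1-\alpha_p))$ (partial fractions), so the $\alpha_p(1-\alpha_p)$ factor in \eqref{alpha_law_2} cancels when $W$ is differentiated along trajectories, and the computation of $\dot V$ reduces to exactly the same cross-term bookkeeping as in Theorem~\ref{pos_theo}.

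First I would establish forward invariance of $D_\alpha$. Because the right-hand side of \eqref{alpha_law_2} carries the factor $\alpha_p(1-\alpha_p)$, along any solution the scalar equation for $\alpha_p$ takes the logistic form $\dot\alpha_p = \gamma_p\,\alpha_p(1-\alpha_p)\,\bar h(t)$, where $\bar h(t)$ collects $\tilde{\mathbf{p}}_a^\top\dot{\mathbf{p}}_{r_d} - (\alpha_p-0.5)(k_p+|\tilde{\mathbf{p}}_a^\top\dot{\mathbf{p}}_{r_d}|)$ evaluated along the trajectory and is bounded on every finite interval on which the solution exists. Separating variables and using $\alpha_p(t_0)=0.5$ gives $\alpha_p(t) = (1+e^{-\psi(t)})^{-1}$ with $\psi(t) = \gamma_p\int_{t_0}^t\bar h(s)\,ds$ finite, hence $\alpha_p(t)\in(0,1)$ for all $t\ge t_0$. (Equivalently: $\alpha_p\equiv 0$ and $\alpha_p\equiv 1$ are invariant, so by uniqueness of solutions a trajectory starting with $\alpha_p\in(0,1)$ cannot reach them; this is precisely the effect of the sigmoidal reparametrization $\alpha_p=(1+e^{-\sigma})^{-1}$ that the update law implicitly realizes.)

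Next I would differentiate $V$ along \eqref{pos_error_dyn}, \eqref{alpha_law_2} under the feedback \eqref{pa_law}. Substituting \eqref{pa_law} into \eqref{pos_error_dyn} gives $\dot{\tilde{\mathbf{p}}}_a = (0.5-\alpha_p)\dot{\mathbf{p}}_{r_d} - \mathbf{K}_p\tilde{\mathbf{p}}_a$, so $\dot V_a = -(\alpha_p-0.5)\tilde{\mathbf{p}}_a^\top\dot{\mathbf{p}}_{r_d} - \tilde{\mathbf{p}}_a^\top\mathbf{K}_p\tilde{\mathbf{p}}_a$; and by the choice of $W$, $\tfrac{1}{\gamma_p}\dot W = (\alpha_p-0.5)\big(\tilde{\mathbf{p}}_a^\top\dot{\mathbf{p}}_{r_d} - (\alpha_p-0.5)(k_p+|\tilde{\mathbf{p}}_a^\top\dot{\mathbf{p}}_{r_d}|)\big)$. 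Adding, the indefinite cross-terms cancel and
\[
    \dot V = -\tilde{\mathbf{p}}_a^\top\mathbf{K}_p\tilde{\mathbf{p}}_a - \big(k_p + |\tilde{\mathbf{p}}_a^\top\dot{\mathbf{p}}_{r_d}|\big)(\alpha_p-0.5)^2 \;\le\; -\tilde{\mathbf{p}}_a^\top\mathbf{K}_p\tilde{\mathbf{p}}_a - k_p(\alpha_p-0.5)^2,
\]
which is negative definite in $(\tilde{\mathbf{p}}_a,\alpha_p)$ on $\mathbb{R}^n\times(0,1)$. Lyapunov's direct method (the barrier keeping sublevel sets inside the domain) then yields asymptotic stability of \eqref{pos_equi}, together with boundedness of all signals; the latter also closes the mild bootstrap gap in the invariance step, since $V$ nonincreasing keeps $\alpha_p$ in a compact subset of $(0,1)$ and precludes finite escape of $\tilde{\mathbf{p}}_a$.

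The only genuinely nontrivial point — the main obstacle — is identifying the barrier: one needs $W'(\alpha_p)\,\alpha_p(1-\alpha_p) = \alpha_p-0.5$ so that $\dot W$ reproduces the cancellation of Theorem~\ref{pos_theo}, and the antiderivative $W = -\tfrac{1}{2}\ln(4\alpha_p(1-\alpha_p))$ then happens to be a logarithmic barrier on $D_\alpha$, which is exactly why a single construction delivers both claims of the lemma. I would finally remark that, unlike in Theorem~\ref{pos_theo}, one cannot upgrade this to exponential stability: $W$ grows faster than any quadratic near $\partial D_\alpha$, so no global inequality $\dot V\le-\lambda V$ can hold, consistent with the weaker (asymptotic) statement.
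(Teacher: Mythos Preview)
Your proposal is correct and follows essentially the same route as the paper: your barrier $-\tfrac{1}{2}\ln\!\big(4\alpha_p(1-\alpha_p)\big)$ is algebraically identical to the paper's $V_\text{barrier}(\alpha_p)=\tfrac{1}{2}\log\!\big(1/(1-(2\alpha_p-1)^2)\big)$, and both yield the same $\dot V = -\tilde{\mathbf{p}}_a^\top\mathbf{K}_p\tilde{\mathbf{p}}_a - (k_p+|\tilde{\mathbf{p}}_a^\top\dot{\mathbf{p}}_{r_d}|)(\alpha_p-0.5)^2$. The only minor difference is in the invariance step: the paper argues directly from $V_\text{barrier}(\alpha_p)\le \gamma_p V(t_0)$, whereas you first give an explicit logistic/sigmoid representation before closing with the same Lyapunov bound; both are sound, and your remarks on the derivative identity $W'(\alpha_p)\alpha_p(1-\alpha_p)=\alpha_p-0.5$ and on why exponential stability fails add useful context absent from the paper.
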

	\begin{proof}
		Consider the Lyapunov candidate function
		\begin{equation}
			V(\tilde{\mathbf{p}}_a, \alpha_p) = V_a(\tilde{\mathbf{p}}_a) + \frac{1}{\gamma_p}V_\text{barrier}(\alpha_p),
			\label{lyap_2}
		\end{equation}
		where $V_a(\tilde{\mathbf{p}}_a)$ is defined in \eqref{pa_lyap} and $V_\text{barrier}(\alpha_p): (0, 1) \rightarrow \mathbb{R}$ is a logarithmic barrier function,
		\begin{equation}
			V_\text{barrier}(\alpha) = \frac{1}{2} \log\left(\frac{1}{1 - \left(\frac{\alpha - 0.5}{0.5}\right)^2} \right),
			\label{barrier}
		\end{equation}
		where $\log$ denotes the natural logarithm.
		Then, the time derivative of \eqref{lyap_2} along the trajectories of \eqref{pos_error_dyn} yields $\dot{V}(\tilde{\mathbf{p}}_a, \alpha_p) = - (k_{\alpha_p} + |\tilde{\mathbf{p}}_a^\top\dot{\mathbf{p}}_{r_d}|) (\alpha_p - 0.5)^2 - \tilde{\mathbf{p}}_a^\top\mathbf{K}_a \tilde{\mathbf{p}}_a \leq 0$ which guarantees the asymptotic stability of the equilibrium point \eqref{pos_equi}.	Furthermore, the negative definiteness of $\dot{V}(\tilde{\mathbf{p}}_a, \alpha_p)$ implies $V(\tilde{\mathbf{p}}_a, \alpha_p) \leq V(\tilde{\mathbf{p}}_a(t_0), \alpha_p(t_0))$ which in turn results in $V_\text{barrier}(\alpha_p) \leq \gamma_p V(\tilde{\mathbf{p}}_a(t_0), \alpha_p(t_0))$.
		Therefore, since $\alpha_p(t_0) = 0.5$, $\alpha_p \in (0, 1) \subset D_\alpha$ is guaranteed for all time $t \geq t_0$.
	\end{proof}
\subsection{Numerical results}
To illustrate the effect of update laws \eqref{alpha_law_1} and \eqref{alpha_law_2}, we run a set of simulations on a 1-dimensional problem and for a planar setup, i.e., $n = 1$ and $n = 2$, respectively.
In both cases,  we adopt 
\begin{equation}
	\mathbf{v}_{r_d} = \mathbf{p}_1 - \mathbf{p}_2,
	\label{relative_control}
\end{equation}
that is, we assume the relative motion task for the two point system is to bring the points together.
This is analogous to tasks such as assembly, where the goal is to mate two parts, and can be represented by the problem of aligning two frames rigidly attached to the end-effector, Fig. \ref{diagram}.
Alternatively, $\mathbf{v}_{r_d}$ could be a periodic signal, e.g., as in a pruning task, or the resultant of a task-specific controller \cite{Almeida2018}.
In all simulations, we use $\mathbf{K}_p = \mathbf{I}_n$.

\begin{figure}[t]
	\centering
		\includegraphics[width = \figsize]{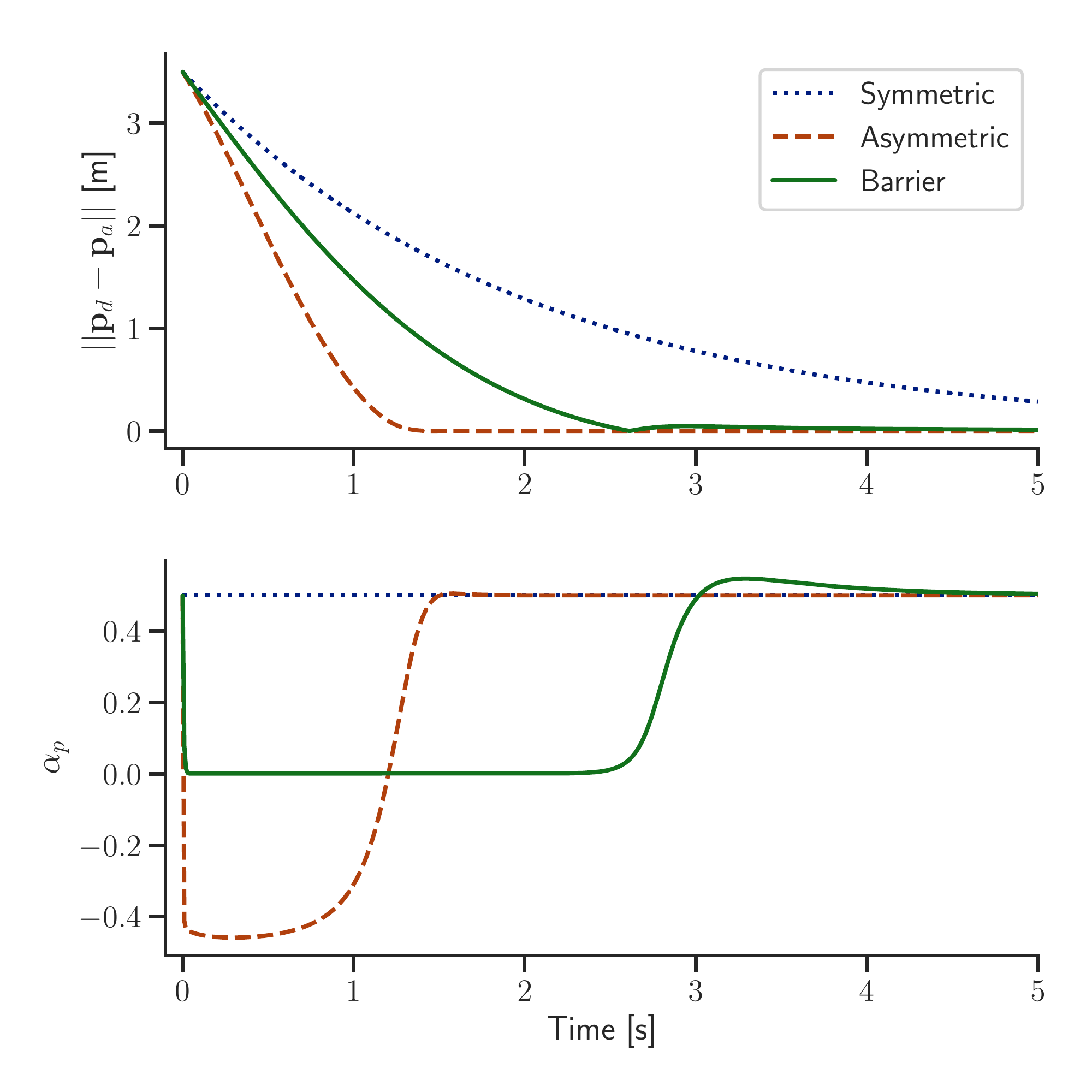}
	\caption{Effect of the induced asymmetries on the system \eqref{point_system} for a 1-dimensional scenario. By dynamically changing the cooperation of the two points on the relative motion task, we are able to accelerate the convergence of the error of the absolute motion task. \label{results_onedim}}
\end{figure}
\subsubsection{Results for 1-dimensional points}
We initialize the system at $\mathbf{p}_1(t_0) = 0$, $\mathbf{p}_2(t_0) = 1$ and $\alpha_p(t_0) = 0.5$, and depict results for $\mathbf{p}_{d} = 3$, $\gamma_p = 100$ and $k_p = 0.25$ in Fig. \ref{results_onedim}.
We labelled the results obtained while using update law \eqref{alpha_law_1} as `Asymmetric' and the results when using \eqref{alpha_law_2} as `Barrier'. 
The case for $\alpha_p = 0.5$ is denoted `Symmetric'. 
The results show how an asymmetric execution of the relative motion task can positively contribute to the convergence rate of the norm of the absolute task error, $||\tilde{\mathbf{p}}_a||$.
It additionally illustrates that, by allowing $\alpha_p \not \in D_{\alpha}$, the convergence rate of the absolute motion task can be accelerated, since the update law \eqref{alpha_law_1} enables the amplification of the effect of $\dot{\mathbf{p}}_{r_d}$ on the absolute motion space. 
Conversely, the update law \eqref{alpha_law_2} constrains $\alpha_p$ to $D_\alpha$ and sticks to the analogy of using $\alpha_p$ as a parameter which enables master-slave, symmetric and blended modes of operation.

Additionally, we depict how $k_{p}$ contributes to the dampening of the update laws for $\alpha_p$ in Fig. \ref{damping_results} for update law \eqref{alpha_law_1} and in Fig. \ref{damping_results_barrier} for the law \eqref{alpha_law_2}.
In both cases, higher $k_{p}$ leads to a more conservative rate of change of $\alpha_p$.
However, lower values may result in an overshoot of $\tilde{\mathbf{p}}_a$ and hinder the convergence rate of the absolute motion error.
In particular, when $k_p = 0$, the convergence result $\alpha_p \rightarrow 0.5$ is no longer guaranteed, as can be observed.

\subsubsection{Planar case}
The disturbance induced by $\dot{\mathbf{p}}_{r_d}$ affects the absolute position $\mathbf{p}_a$ only on the subspace along which there is relative motion, as clearly seen in eq. \eqref{pos_error_dyn}. 
Therefore, for the relative control law \eqref{relative_control}, the induced disturbance will affect $\dot{\mathbf{p}}_a$ only along the line defined by $\mathbf{p}_1 - \mathbf{p}_2$. 
In higher dimensional cases, the update laws \eqref{alpha_law_1} and \eqref{alpha_law_2} will contribute to the alignment of the two points such that the projection of the target $\mathbf{p}_d$ on $\mathbf{p}_1 - \mathbf{p}_2$ is equidistant from both points.
Simulation results are depicted in Fig. \ref{plane_asym}.
In the simulation, the initial conditions were set as $\mathbf{p}_1(t_0) = [0, 0]^\top$, $\mathbf{p}_2(t_0) = [1, 0]^\top$ and $\alpha_p(t_0) = 0.5$, with target average position $\mathbf{p}_d = [-0.1, 0.1]^\top$ and $\gamma_p = 100, k_{p} = 0.25$.
In addition to the labels used in Fig. \ref{results_onedim}, we use `Constant' for the simulated the result of adopting a master-slave approach for the relative motion task by setting a constant $\alpha_p = 1$.
Note that for the simulated initial conditions, setting $\mathbf{p}_1$ as the master in the execution of the relative motion task significantly affects the convergence of the absolute position error.

\begin{figure}[t]
	\centering
	\includegraphics[width = \figsize]{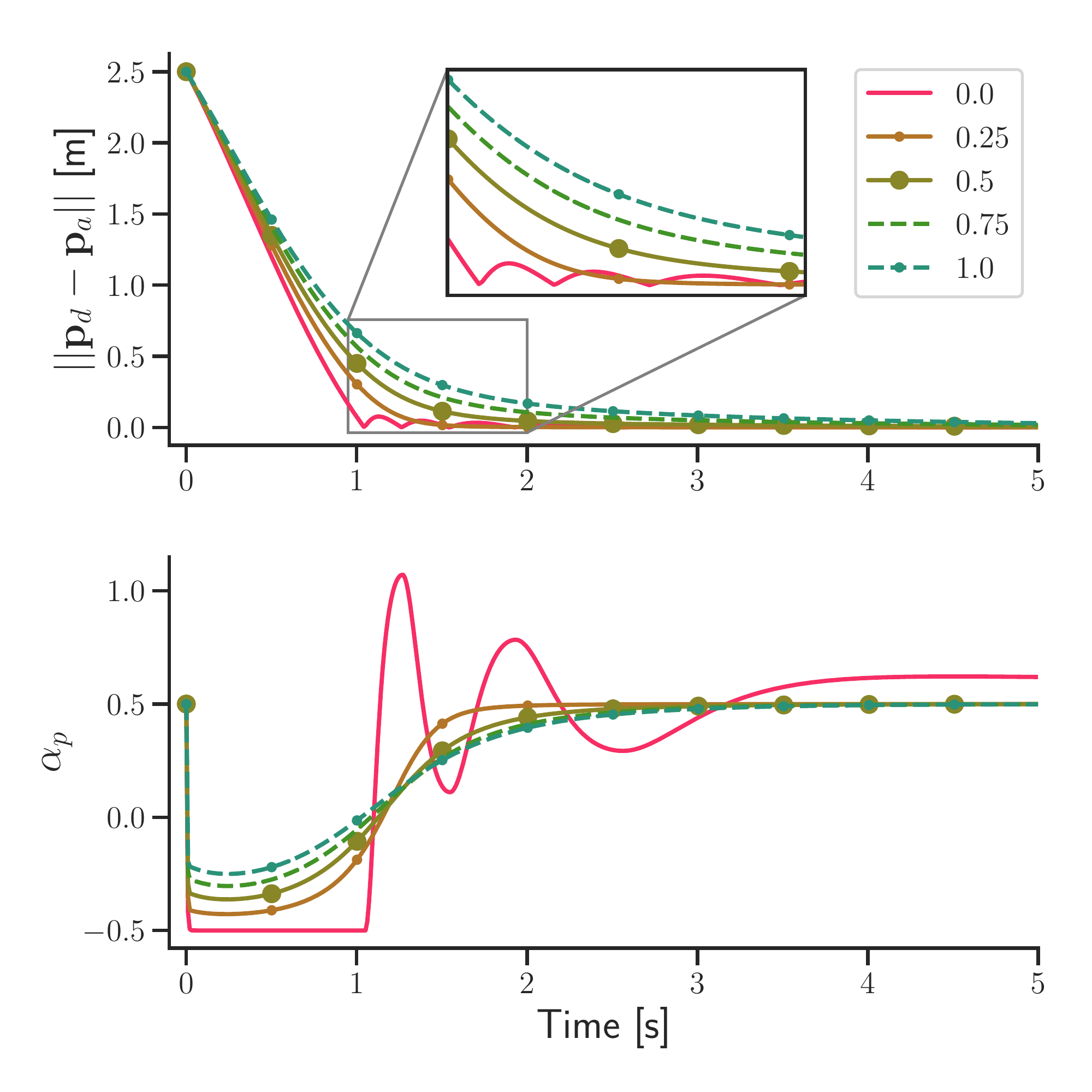}
	\caption{The effects of changing the parameter $k_{p}$ in the update law \eqref{alpha_law_1}.\label{damping_results}}
\end{figure}

\section{Dual-Arm manipulation \label{dual_arm}}
The update laws \eqref{alpha_law_1} and \eqref{alpha_law_2}, together with \eqref{pa_law}, ensure the stability of the linear component of the absolute motion task. 
To fully address Problem \ref{cooperative_problem}, we require an update law for $\alpha_\omega$ and a corresponding angular control law.
The dynamics of the absolute orientation are given in \eqref{abs_ori_prop}.
The angular velocities of the cooperative system are related to the angular part of \eqref{two_alphas_res},
\begin{equation}
	\begin{bmatrix}
		\boldsymbol{\omega}_{1_d}\\
		\boldsymbol{\omega}_{2_d}
	\end{bmatrix} = \begin{bmatrix}
		\mathbf{I}_3 & -(1 - \alpha_\omega)\mathbf{I}_3\\
		\mathbf{I}_3 & \alpha_\omega\mathbf{I}_3
	\end{bmatrix} \begin{bmatrix}
		\boldsymbol{\omega}_{a_d}\\
		\boldsymbol{\omega}_{r_d}
	\end{bmatrix},
\end{equation}
and the system's absolute angular velocity is disturbed by $\boldsymbol{\omega}_{r_d}$ analogously to \eqref{vr_disturbance}, as seen in \eqref{angular_vel}.

\begin{figure}[t]
	\centering
	\includegraphics[width = \figsize]{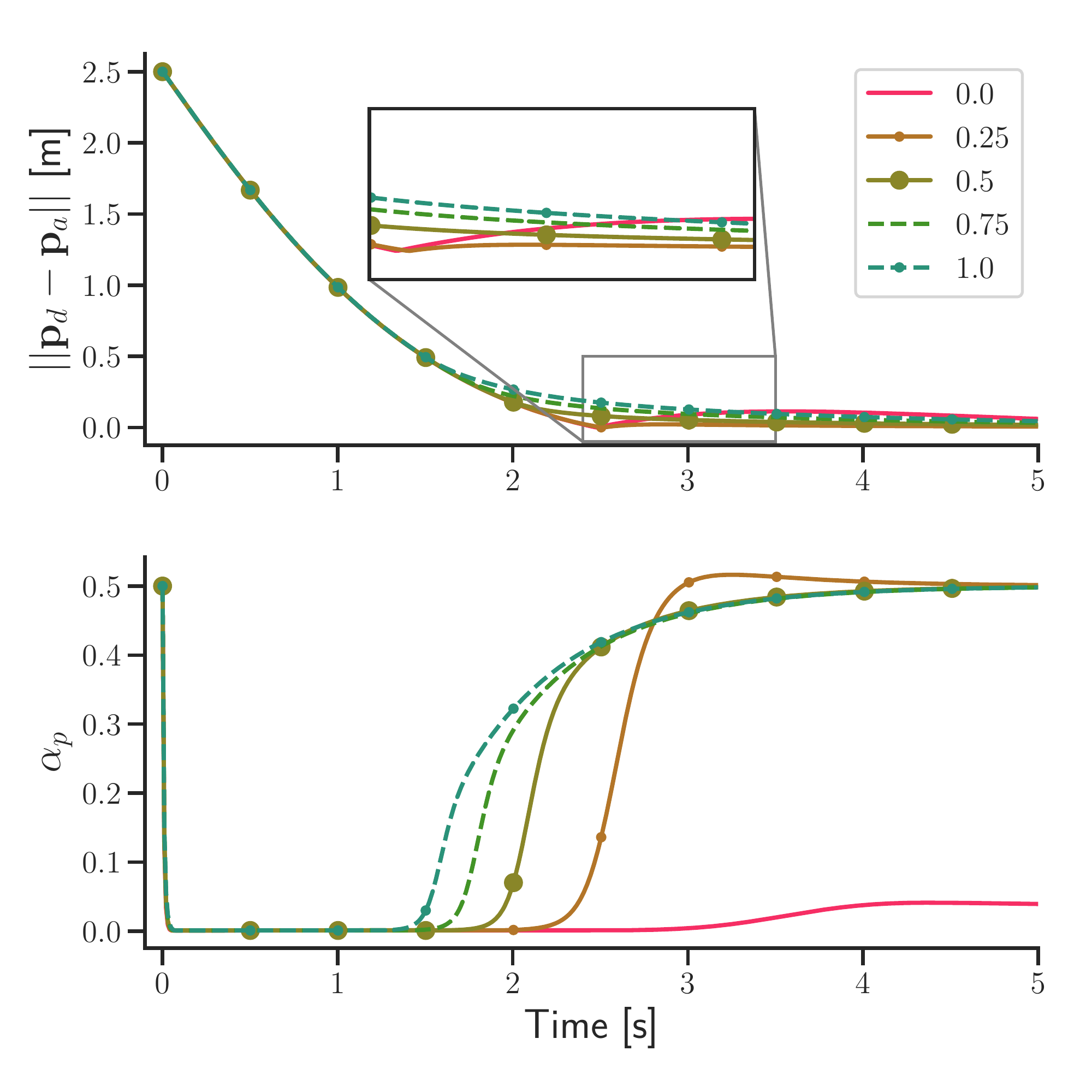}
		\caption{The effects of changing the parameter $k_{p}$ when using the update law \eqref{alpha_law_2}.\label{damping_results_barrier}}
\end{figure}
\begin{figure}[t]
	\centering
	\includegraphics[width = 0.4\textwidth]{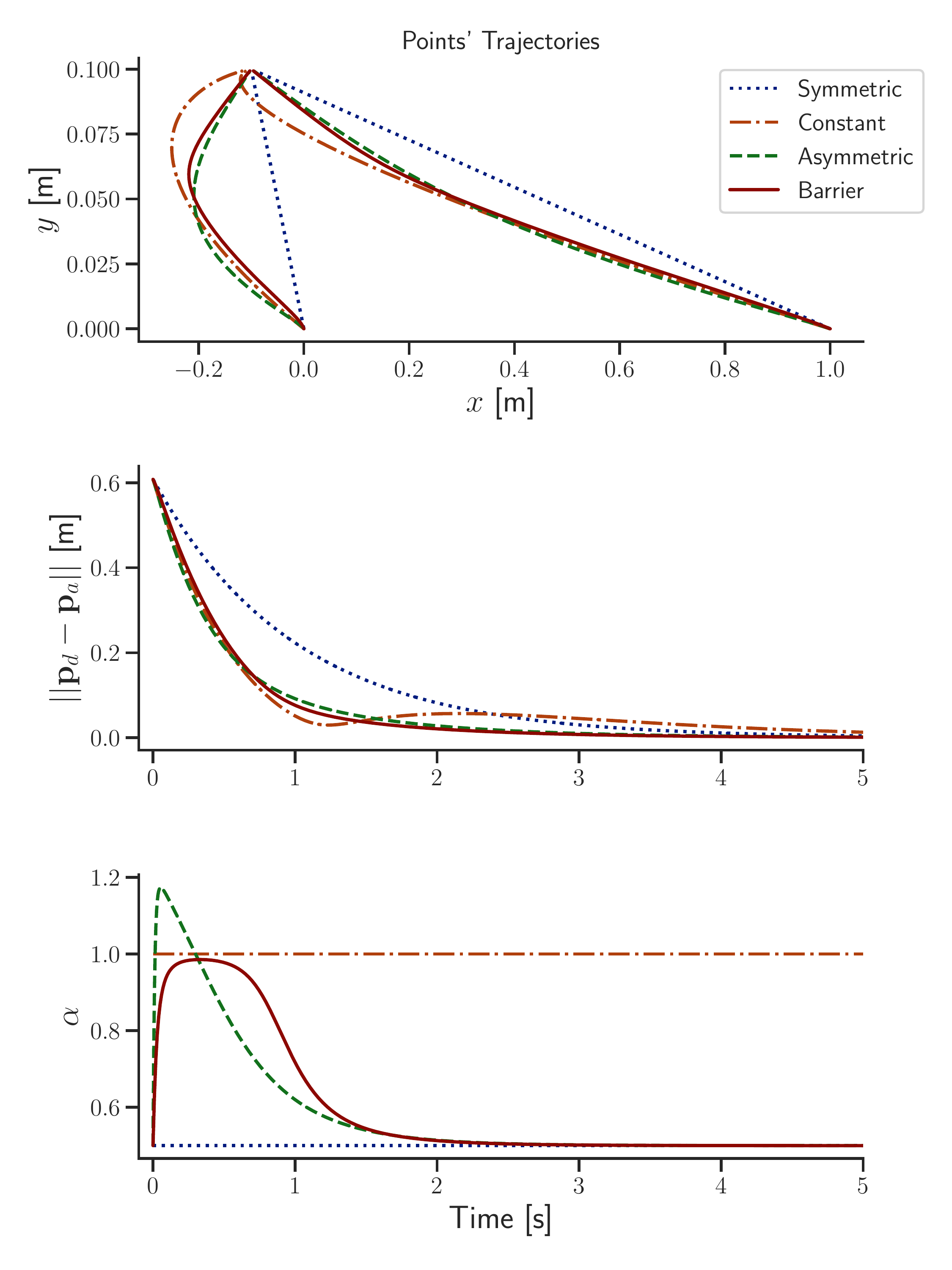}
	\caption{Adjusting the asymmetries of the relative task execution can improve the convergence rate of the absolute motion task. However, the asymmetric execution only affects the components of $\tilde{\mathbf{p}}_a$ along the direction of $\dot{\mathbf{p}}_{r_d}$. \label{plane_asym}}
\end{figure}
\begin{theorem}
	Consider the feedback control law
	\begin{equation}
		\boldsymbol{\omega}_{a_d} = \mathbf{K}_{\omega} \tilde{\boldsymbol{\epsilon}}_a
		\label{angular_control}
	\end{equation}
	and the following update equation for $\alpha_\omega$, with $\gamma_\omega, k_{\omega} > 0$,
	\begin{equation}
		f_\omega(\tilde{\mathcal{Q}}_a, \alpha_\omega) = \gamma_\omega (\tilde{\boldsymbol{\epsilon}}_a^\top\boldsymbol{\omega}_{r_d} - k_{\omega}(\alpha_\omega - 0.5)).
		\label{alpha_law_3}
	\end{equation}
	Together, \eqref{angular_control} and \eqref{alpha_law_3} ensure the asymptotic stability of \eqref{ang_equi}.
		\label{ori_theo}
\end{theorem}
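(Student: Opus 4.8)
The plan is to mirror the proof of Theorem~\ref{pos_theo}, replacing the translational Lyapunov term by a quaternion error function and retaining the asymmetry penalty~\eqref{quadratic_alpha}. I would take the composite candidate
\begin{equation}
	V(\tilde{\mathcal{Q}}_a, \alpha_\omega) = V_o(\tilde{\mathcal{Q}}_a) + \frac{1}{\gamma_\omega} V_\text{asym}(\alpha_\omega),
\end{equation}
with $V_o(\tilde{\mathcal{Q}}_a) = (\tilde{\eta}_a - 1)^2 + \tilde{\boldsymbol{\epsilon}}_a^\top \tilde{\boldsymbol{\epsilon}}_a$ and $V_\text{asym}$ as in~\eqref{quadratic_alpha}. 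On the unit sphere $\tilde{\eta}_a^2 + \tilde{\boldsymbol{\epsilon}}_a^\top \tilde{\boldsymbol{\epsilon}}_a = 1$ one has $V_o = 2(1 - \tilde{\eta}_a) \geq 0$, which vanishes exactly at the target orientation of~\eqref{ang_equi}, so $V$ is positive definite with respect to~\eqref{ang_equi}. (The orientation of $V_o$ — i.e.\ whether $2(1-\tilde{\eta}_a)$ or $2(1+\tilde{\eta}_a)$ is used — has to be matched to the sign convention of the propagation equation~\eqref{abs_ori_prop} so that the closed loop renders $V$ non-increasing; this is one of the points to be careful about.)

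Next I would differentiate $V$ along~\eqref{abs_ori_prop}, \eqref{angular_vel} and~\eqref{dynamic_alphas}. Using the two-term form of $V_o$ together with $\mathbf{S}(\tilde{\boldsymbol{\epsilon}}_a)\tilde{\boldsymbol{\epsilon}}_a = \mathbf{0}$ and the unit-norm constraint, the quaternion propagation equation collapses $\dot{V}_o$ to a multiple of $\tilde{\boldsymbol{\epsilon}}_a^\top \boldsymbol{\omega}_a$. Substituting $\boldsymbol{\omega}_a = (\alpha_\omega - 0.5)\boldsymbol{\omega}_{r_d} + \mathbf{K}_\omega \tilde{\boldsymbol{\epsilon}}_a$ from~\eqref{angular_vel} and the control law~\eqref{angular_control}, $\dot{V}_o$ splits into a sign-definite feedback contribution $-\tilde{\boldsymbol{\epsilon}}_a^\top \mathbf{K}_\omega \tilde{\boldsymbol{\epsilon}}_a$ and an indefinite cross term proportional to $(\alpha_\omega - 0.5)\tilde{\boldsymbol{\epsilon}}_a^\top \boldsymbol{\omega}_{r_d}$. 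Exactly as in the translational case, the update law~\eqref{alpha_law_3} is designed so that the term $\frac{1}{\gamma_\omega}(\alpha_\omega - 0.5)\dot{\alpha}_\omega$ reproduces that cross term with the opposite sign while adding the damping $-k_\omega(\alpha_\omega - 0.5)^2$; the cross terms cancel and
\begin{equation}
	\dot{V}(\tilde{\mathcal{Q}}_a, \alpha_\omega) = -\tilde{\boldsymbol{\epsilon}}_a^\top \mathbf{K}_\omega \tilde{\boldsymbol{\epsilon}}_a - k_\omega (\alpha_\omega - 0.5)^2 \leq 0.
\end{equation}
I would also remark that, unlike~\eqref{alpha_law_1}, no $|\,\tilde{\boldsymbol{\epsilon}}_a^\top \boldsymbol{\omega}_{r_d}\,|$-type damping is required in~\eqref{alpha_law_3}, since $\|\tilde{\boldsymbol{\epsilon}}_a\| \leq 1$ and $\boldsymbol{\omega}_{r_d}$ is assumed bounded, so $\tilde{\boldsymbol{\epsilon}}_a^\top \boldsymbol{\omega}_{r_d}$ is automatically bounded.

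Finally I would close the argument. In contrast with Theorem~\ref{pos_theo}, here $\dot{V}$ is only negative semidefinite on the full state: it vanishes on the set $\{\tilde{\boldsymbol{\epsilon}}_a = \mathbf{0},\ \alpha_\omega = 0.5\}$, which through the unit-norm constraint contains both $\tilde{\mathcal{Q}}_a = \{1, \mathbf{0}\}$ and the antipodal $\{-1, \mathbf{0}\}$. I would therefore invoke a LaSalle/Barbalat argument: $V$ is bounded and non-increasing, hence $\tilde{\mathcal{Q}}_a$ and $\alpha_\omega$ are bounded; $\dot{V}$ is uniformly continuous (boundedness of the state and of $\boldsymbol{\omega}_{r_d}$), so $\dot{V} \to 0$ and thus $\tilde{\boldsymbol{\epsilon}}_a \to \mathbf{0}$, $\alpha_\omega \to 0.5$; the spurious antipodal limit is ruled out by the sublevel-set inequality $V(t) \leq V(t_0) < 2$ whenever $\tilde{\eta}_a(t_0) > 0$, which forces $\tilde{\eta}_a \to 1$. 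This yields asymptotic — not exponential — stability of~\eqref{ang_equi}, as claimed; the loss of an exponential rate is intrinsic, since the quaternion function $V_o$ does not admit a global bound $\dot{V} \leq -\lambda V$. I expect the main obstacle to be the quaternion-geometry bookkeeping rather than the Lyapunov algebra: pinning down the sign of $\dot{V}_o$ consistently with~\eqref{abs_ori_prop}, confirming positive definiteness of $V_o$ relative to the intended equilibrium while disposing of the antipodal zero, and running the invariance argument cleanly; the cross-term cancellation itself transcribes directly from Theorem~\ref{pos_theo}.
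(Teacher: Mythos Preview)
Your approach coincides with the paper's: the same composite Lyapunov candidate (quaternion term plus $\tfrac{1}{\gamma_\omega}V_{\text{asym}}$), the same cross-term cancellation via~\eqref{alpha_law_3}, and the same resulting $\dot V = -\tilde{\boldsymbol{\epsilon}}_a^\top\mathbf{K}_\omega\tilde{\boldsymbol{\epsilon}}_a - k_\omega(\alpha_\omega-0.5)^2$. The only cosmetic difference is that the paper writes the quaternion term as $V_\omega(\tilde{\mathcal{Q}}_a)=(\eta_d-\eta_a)^2+(\boldsymbol{\epsilon}_d-\boldsymbol{\epsilon}_a)^\top(\boldsymbol{\epsilon}_d-\boldsymbol{\epsilon}_a)$ rather than your $(\tilde\eta_a-1)^2+\tilde{\boldsymbol{\epsilon}}_a^\top\tilde{\boldsymbol{\epsilon}}_a$; both are standard and yield the same derivative. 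Where you go further than the paper is in the closing step: the paper simply declares $\dot V$ ``negative definite'' and concludes, whereas you correctly observe that $\dot V$ is only negative \emph{semidefinite} on the full quaternion state (it vanishes at the antipodal point $\tilde{\mathcal{Q}}_a=\{-1,\mathbf{0}\}$) and supply the LaSalle/Barbalat plus sublevel-set argument to exclude that limit. Your version is therefore more careful than the published proof, not less.
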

	\begin{proof}
		Consider the following Lyapunov candidate function,
		\begin{equation}
				V(\tilde{\mathcal{Q}}_a, \alpha_\omega) =  V_\omega(\tilde{\mathcal{Q}}_a) + \frac{1}{\gamma_\omega}V_\text{asym}(\alpha_\omega),
			\label{V_ang}
		\end{equation}
		where
		\begin{equation}
			V_\omega(\tilde{\mathcal{Q}}_a) = (\eta_d - \eta_a)^2 + (\boldsymbol{\epsilon}_d - \boldsymbol{\epsilon}_a)^\top(\boldsymbol{\epsilon}_d - \boldsymbol{\epsilon}_a)
			\label{lyap_quat}
		\end{equation}
		and $V_\text{asym}(\alpha_\omega)$ is given by \eqref{quadratic_alpha}.
		The time derivative of \eqref{V_ang} along the system trajectories is given by $\dot{V}(\tilde{\mathbf{Q}}_a, \alpha_\omega) =  -k_\omega (\alpha_\omega - 0.5)^2 - \tilde{\boldsymbol{\epsilon}}_a^\top\mathbf{K}_\omega\tilde{\boldsymbol{\epsilon}}_a \leq 0$, which is negative definite and thus the equilibrium point \eqref{ang_equi} is asymptotically stable. 
	\end{proof}
As in the solution to Problem \ref{two_point_problem}, we can constrain $\alpha_\omega$ such that $\alpha_\omega \in D_\alpha$ by making use of a logarithmic barrier function.
\begin{lemma}
	Let $\alpha_\omega(t_0) = 0.5$. Then, the update rule
	\begin{equation}
			f_\omega(\tilde{\mathcal{Q}}_a, \alpha_\omega) = \gamma_\omega \alpha_\omega (1 - \alpha_\omega) (\tilde{\boldsymbol{\epsilon}}_a^\top\boldsymbol{\omega}_{r_d} - k_{\omega}(\alpha_\omega - 0.5)),
			\label{alpha_law_4}
	\end{equation}
	together with the control law \eqref{angular_control}, ensures the asymptotic stability of \eqref{ang_equi}.
	Additionally, \eqref{alpha_law_4} guarantees $\alpha_\omega \in D_\alpha$ $\forall t \geq t_0$.
		\label{ori_barrier_lemma}
\end{lemma}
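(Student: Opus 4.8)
The plan is to mirror, step for step, the proof of Lemma~\ref{pos_barrier_lemma}, now in the rotational variables, replacing the quadratic term $V_\text{asym}$ in \eqref{V_ang} by the logarithmic barrier \eqref{barrier}. Concretely, I would take as Lyapunov candidate
\[
V(\tilde{\mathcal{Q}}_a, \alpha_\omega) = V_\omega(\tilde{\mathcal{Q}}_a) + \frac{1}{\gamma_\omega} V_\text{barrier}(\alpha_\omega),
\]
with $V_\omega$ as in \eqref{lyap_quat} and $V_\text{barrier}\colon(0,1)\to\mathbb{R}$ as in \eqref{barrier}. Note that $V_\text{barrier}$ is positive, vanishes at $\alpha_\omega=0.5$ (hence $V(t_0)=V_\omega(\tilde{\mathcal{Q}}_a(t_0))<\infty$, since $\alpha_\omega(t_0)=0.5$), and diverges as $\alpha_\omega\to 0^+$ or $\alpha_\omega\to 1^-$.

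The key algebraic step is the cancellation already engineered into \eqref{alpha_law_4}. Using $1-((\alpha-0.5)/0.5)^2 = 4\alpha(1-\alpha)$ one gets $V_\text{barrier}'(\alpha) = (\alpha-0.5)/(\alpha(1-\alpha))$, so the factor $\alpha_\omega(1-\alpha_\omega)$ in \eqref{alpha_law_4} cancels exactly against this denominator, giving $\tfrac{1}{\gamma_\omega}\tfrac{d}{dt}V_\text{barrier}(\alpha_\omega) = (\alpha_\omega-0.5)\big(\tilde{\boldsymbol{\epsilon}}_a^\top\boldsymbol{\omega}_{r_d} - k_\omega(\alpha_\omega-0.5)\big)$, i.e. precisely the contribution that the quadratic $V_\text{asym}$ made under \eqref{alpha_law_3} in the proof of Theorem~\ref{ori_theo}. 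Combining this with $\dot{V}_\omega = -\tilde{\boldsymbol{\epsilon}}_a^\top\boldsymbol{\omega}_a$ and substituting $\boldsymbol{\omega}_a = (\alpha_\omega-0.5)\boldsymbol{\omega}_{r_d}+\mathbf{K}_\omega\tilde{\boldsymbol{\epsilon}}_a$ from \eqref{angular_vel} and \eqref{angular_control}, the cross terms $\pm(\alpha_\omega-0.5)\tilde{\boldsymbol{\epsilon}}_a^\top\boldsymbol{\omega}_{r_d}$ cancel and I obtain exactly $\dot{V} = -\tilde{\boldsymbol{\epsilon}}_a^\top\mathbf{K}_\omega\tilde{\boldsymbol{\epsilon}}_a - k_\omega(\alpha_\omega-0.5)^2 \le 0$, the same expression as in Theorem~\ref{ori_theo}. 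Asymptotic stability of \eqref{ang_equi} then follows by the argument already used there: $\dot V$ is negative definite in $(\tilde{\boldsymbol{\epsilon}}_a,\alpha_\omega-0.5)$, and the unit-norm constraint $\tilde\eta_a^2+\|\tilde{\boldsymbol{\epsilon}}_a\|^2=1$ ties $\tilde\eta_a$ to $\tilde{\boldsymbol{\epsilon}}_a$.

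For the set-invariance claim I would argue exactly as in Lemma~\ref{pos_barrier_lemma}: $\dot V\le 0$ gives $V(t)\le V(t_0)$ for all $t\ge t_0$, and since $V_\omega\ge 0$ this yields the finite bound $V_\text{barrier}(\alpha_\omega(t))\le \gamma_\omega V(\tilde{\mathcal{Q}}_a(t_0),\alpha_\omega(t_0))$. Because $V_\text{barrier}(\alpha)\to\infty$ at both endpoints of $(0,1)$, $\alpha_\omega$ cannot approach $0$ or $1$, so $\alpha_\omega(t)\in(0,1)\subset D_\alpha$ for all $t\ge t_0$ (and the closed-loop solution remains well defined on this open interval for all time).

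I expect the main obstacle to be the same subtlety as in Lemma~\ref{pos_barrier_lemma} rather than the Lyapunov inequality itself: one must justify that the trajectory never leaves the open interval $(0,1)$ on which the barrier, and hence $V$ and the closed-loop vector field, are defined — handled by the boundedness-of-$V$/radial-unboundedness-of-$V_\text{barrier}$ argument above, which also rules out finite escape time. A secondary point to state carefully is that $\dot V$ contains no $\tilde\eta_a$ term, so concluding convergence of the full orientation error (not merely of $\tilde{\boldsymbol{\epsilon}}_a$) relies on the quaternion norm identity exactly as in the proof of Theorem~\ref{ori_theo}.
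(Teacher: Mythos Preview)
Your proposal is correct and follows essentially the same approach as the paper: the same Lyapunov candidate $V_\omega + \tfrac{1}{\gamma_\omega}V_\text{barrier}$, the same resulting derivative $\dot V = -\tilde{\boldsymbol{\epsilon}}_a^\top\mathbf{K}_\omega\tilde{\boldsymbol{\epsilon}}_a - k_\omega(\alpha_\omega-0.5)^2$, and the same boundedness-of-$V$ argument (borrowed from Lemma~\ref{pos_barrier_lemma}) for $\alpha_\omega\in D_\alpha$. You simply make explicit the barrier-derivative computation and the cross-term cancellation that the paper leaves implicit.
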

	\begin{proof}
		Consider the Lyapunov candidate function
		\begin{equation}
			V(\tilde{\mathcal{Q}}_a, \alpha_\omega) = V_\omega(\tilde{\mathcal{Q}}_a) + \frac{1}{\gamma_\omega}V_\text{barrier}(\alpha_\omega),
			\label{lyap_4}
		\end{equation}
		where $V_\omega(\tilde{\mathcal{Q}}_a)$ is defined in \eqref{lyap_quat} and $V_\text{barrier}(\alpha_\omega)$ is the barrier function defined in \eqref{barrier}.
		Then, the time derivative of \eqref{lyap_4} over the trajectories of \eqref{abs_ori_prop} yields $\dot{V}(\tilde{\mathcal{Q}}_a, \alpha_\omega) = - k_{\omega} (\alpha_\omega - 0.5)^2 -  \tilde{\boldsymbol{\epsilon}
		}_a^\top\mathbf{K}_\omega\tilde{\boldsymbol{\epsilon}}_a \leq 0$.
		Additionally, the proof that $\alpha_\omega \in D_\alpha$ for all $t \geq t_0$ follows the proof that $\alpha_p \in D_\alpha$ from Lemma~\ref{pos_barrier_lemma}.
	\end{proof}
We can now state our solution to Problem \ref{cooperative_problem} as a combination of the solution to Problem \ref{two_point_problem} and the previous analysis for the control of the absolute orientation.
\begin{figure}[t!]
	\centering
	\includegraphics[width = \figsize]{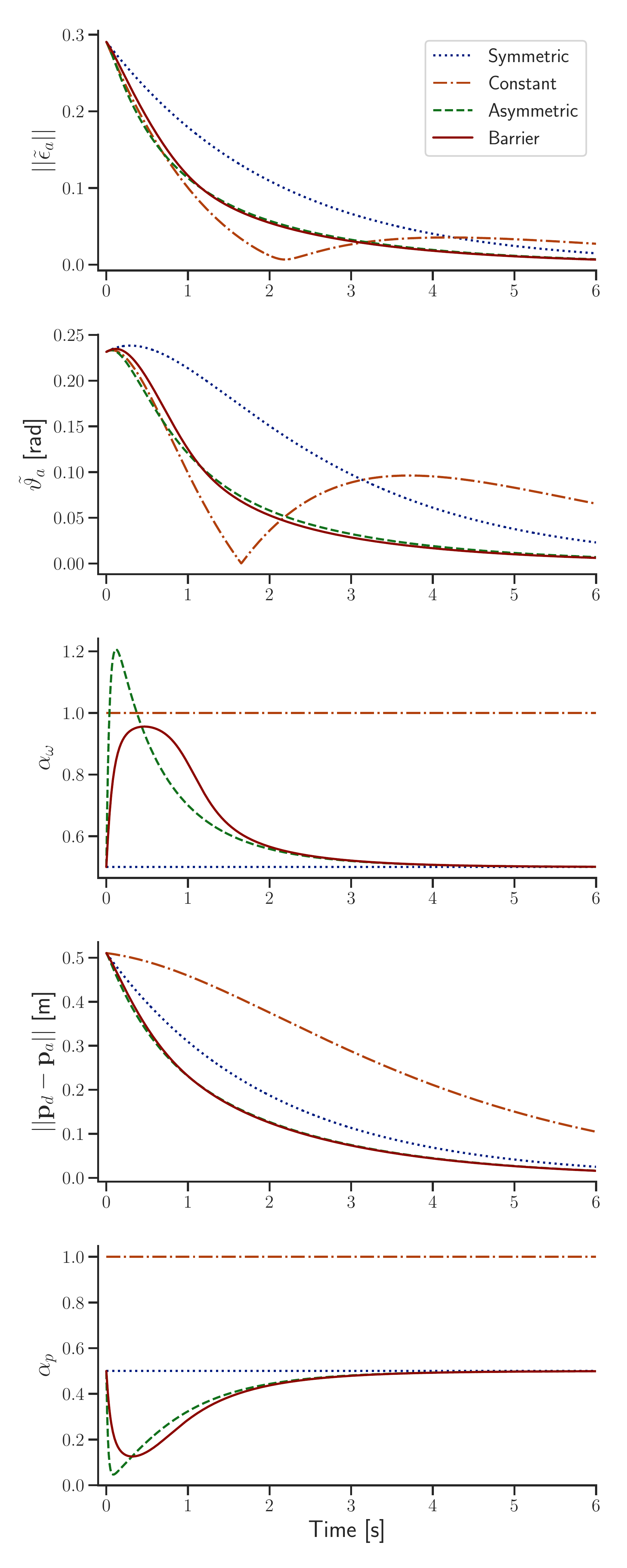}
	\caption{Results for a dual-arm cooperative task. The convergence of the absolute motion task is affected by the type of cooperation in the relative motion task.\label{results_pose}}
\end{figure}
\begin{theorem}
	Consider the linear control law \eqref{pa_law} for $n = 3$ and the angular control law \eqref{angular_control}.
	Together with the update law for $\alpha_p$ \eqref{alpha_law_1} and the update law \eqref{alpha_law_3} for $\alpha_\omega$, the cooperative manipulation system \eqref{pos_error_dyn}-\eqref{abs_ori_prop} will have an asymptotically stable equilibrium point \eqref{pos_equi}-\eqref{ang_equi}.
	If it is desired for $\alpha_p \in D_\alpha$ or $\alpha_\omega \in D_\alpha$, the update laws \eqref{alpha_law_2}  and \eqref{alpha_law_4}, respectively, can be used.
	\end{theorem}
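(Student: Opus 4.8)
The plan is to exploit the fact that, under the stated control and update laws, the closed loop splits into two subsystems that do not interact: the translational subsystem with state $(\tilde{\mathbf{p}}_a, \alpha_p)$, governed by \eqref{pos_error_dyn}, \eqref{pa_law} and \eqref{alpha_law_1}, and the rotational subsystem with state $(\tilde{\mathcal{Q}}_a, \alpha_\omega)$, governed by \eqref{abs_ori_prop}, \eqref{angular_control} and \eqref{alpha_law_3}. First I would argue this decoupling explicitly: by \eqref{vr_disturbance} the linear part of $\mathbf{v}_{r_d}$ perturbs only $\dot{\mathbf{p}}_a$ and the angular part only $\boldsymbol{\omega}_a$, so $\dot{\mathbf{p}}_{r_d}$ and $\boldsymbol{\omega}_{r_d}$ enter the two subsystems as independent (known, bounded) signals, and the right-hand sides \eqref{alpha_law_1} and \eqref{alpha_law_3} depend, respectively, only on $(\tilde{\mathbf{p}}_a, \alpha_p)$ and $(\tilde{\mathcal{Q}}_a, \alpha_\omega)$.

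Next I would take the composite Lyapunov function obtained by adding the two functions already used in Theorem \ref{pos_theo} and Theorem \ref{ori_theo},
\begin{equation}
	V = V_a(\tilde{\mathbf{p}}_a) + \tfrac{1}{\gamma_p}V_\text{asym}(\alpha_p) + V_\omega(\tilde{\mathcal{Q}}_a) + \tfrac{1}{\gamma_\omega}V_\text{asym}(\alpha_\omega),
\end{equation}
which is positive definite about \eqref{pos_equi}--\eqref{ang_equi}. Because of the decoupling, $\dot{V}$ evaluated along the closed-loop trajectories is exactly the sum of the two derivatives computed in those proofs, i.e.
\begin{equation}
	\dot{V} = -\bigl(k_p + |\tilde{\mathbf{p}}_a^\top\dot{\mathbf{p}}_{r_d}|\bigr)(\alpha_p - 0.5)^2 - \tilde{\mathbf{p}}_a^\top\mathbf{K}_p\tilde{\mathbf{p}}_a - k_\omega(\alpha_\omega - 0.5)^2 - \tilde{\boldsymbol{\epsilon}}_a^\top\mathbf{K}_\omega\tilde{\boldsymbol{\epsilon}}_a,
\end{equation}
which is negative definite in the full state $(\tilde{\mathbf{p}}_a, \alpha_p, \tilde{\mathcal{Q}}_a, \alpha_\omega)$. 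Asymptotic stability of \eqref{pos_equi}--\eqref{ang_equi} then follows from the standard Lyapunov theorem. For the second assertion, I would simply replace $V_\text{asym}(\alpha_p)$ by the barrier $V_\text{barrier}(\alpha_p)$ of \eqref{barrier} and/or $V_\text{asym}(\alpha_\omega)$ by $V_\text{barrier}(\alpha_\omega)$, invoking Lemma \ref{pos_barrier_lemma} and Lemma \ref{ori_barrier_lemma} so that $\dot V$ retains the same negative-definite form and, since $V$ is radially unbounded as $\alpha_p$ or $\alpha_\omega$ approach the boundary of $D_\alpha$, the sublevel-set argument of those lemmas keeps $\alpha_p, \alpha_\omega \in D_\alpha$ for all $t \ge t_0$.

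The only real content beyond bookkeeping is the decoupling claim, so that is where I would be most careful: one must confirm that neither the orientation error $\tilde{\mathcal{Q}}_a$ nor $\alpha_\omega$ appears in the $\tilde{\mathbf{p}}_a$-subsystem and vice versa — which holds because $\boldsymbol{\Lambda}(\alpha_p,\alpha_\omega)$ is block diagonal and $\mathbf{v}_{r_d}$ is an exogenous bounded reference rather than a feedback of the absolute state. Everything else is a direct appeal to the previously proved results, so I do not expect further obstacles; it is essentially a composition-of-subsystems argument.
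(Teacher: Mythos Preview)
Your proposal is correct and follows essentially the same approach as the paper: the paper also forms the composite Lyapunov function $V_a(\tilde{\mathbf{p}}_a) + V_\omega(\tilde{\mathcal{Q}}_a) + \tfrac{1}{\gamma_p}V_\text{asym}(\alpha_p) + \tfrac{1}{\gamma_\omega}V_\text{asym}(\alpha_\omega)$ and concludes negative definiteness of $\dot V$ by direct appeal to Theorems~\ref{pos_theo} and~\ref{ori_theo}, then handles the constrained case by substituting $V_\text{barrier}$ and invoking Lemmas~\ref{pos_barrier_lemma} and~\ref{ori_barrier_lemma}. Your explicit decoupling discussion and written-out $\dot V$ are slightly more detailed than the paper's terse ``follows from the aforementioned Theorems,'' but the argument is the same.
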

	\begin{proof}
		The stability of the system \eqref{pos_error_dyn}-\eqref{abs_ori_prop} can be asserted by considering the Lyapunov candidate function
		\begin{equation}
			\begin{aligned}
				V(\tilde{\mathbf{p}}_a, \tilde{\mathcal{Q}}_a, \alpha_p, \alpha_\omega) &= V_a(\tilde{\mathbf{p}}_a) + V_\omega(\tilde{\mathcal{Q}}_a)\\
				 &+ \frac{1}{\gamma_p}V_\text{asym}(\alpha_p) + \frac{1}{\gamma_\omega}V_\text{asym}(\alpha_\omega),
			\end{aligned}
			\label{final_lyap}
		\end{equation}
		with definitions from Theorems \ref{pos_theo} and \ref{ori_theo}.
		The time derivative of \eqref{final_lyap} along the system's trajectories is negative everywhere except at \eqref{pos_equi}-\eqref{ang_equi}, which can be straightforwardly observed from the proofs of the aforementioned Theorems. 
		Alternatively, $V_\text{barrier}(\alpha_p)$ and $V_\text{barrier}(\alpha_\omega)$ can be used to constrain $\alpha_p$ and $\alpha_\omega$, respectively, with the stability proof following from Lemmas \ref{pos_barrier_lemma} and \ref{ori_barrier_lemma}. 
	\end{proof}

\subsection{Numerical results}
We simulate a cooperative manipulation problem where $\{h_1\}$ and $\{h_2\}$ represent two frames rigidly attached to a dual-arm robotic manipulator.
In our simulations, we ignore the effects that kinematic limitations would have on the range of possible motions of the cooperative system, and focus our analysis exclusively on the regulation of the task-space quantities.
Additionally, without loss of generality, we assume that $\mathbf{v}_r$ is generated by the feedback control law
\begin{equation}
	\mathbf{v}_r = \mathbf{K}_r \begin{bmatrix}
		\mathbf{p}_1 - \mathbf{p}_2\\
		\tilde{\boldsymbol{\epsilon}}_r
	\end{bmatrix},
\end{equation}
with $\mathbf{K}_r \in \mathbb{R}^{6\times 6}$ being positive definite and $\tilde{\boldsymbol{\epsilon}}_r$ is the vector part of the error quaternion $\tilde{\mathcal{Q}}_r = \mathcal{Q}_1 * \mathcal{Q}_2^{-1}$.
In the presented results, we set the gains of the all the update laws as $\gamma_p = \gamma_\omega = 100$ and $k_p = k_\omega = 0.25$.
We initialize the simulations with $\mathbf{p}_1(t_0) = [1, 0, -0.2]^\top$, $\mathbf{p}_2(t_0) = [0.0, 0.2, 0.25]^\top$, $\mathcal{Q}_1(t_0) = \{0.9, [0.261, 0, 0.348]^\top\}$, $\mathcal{Q}_2(t_0) = \{0.9, [-0.195, -0.389, 0]^\top\}$ and $\alpha_p(t_0) = \alpha_\omega(t_0) = 0.5$.
The target absolute pose is set as $\mathbf{p}_d = \mathbf{0}$ and $\mathcal{Q}_d = \{0.921, [0.275, 0, 0.275]^\top\}$.
The control gains are set as $\mathbf{K}_r = \mathbf{I}_6$, $\mathbf{K}_\omega = \mathbf{I}_3$ and $\mathbf{K}_p = \mathbf{I}_3$.

The numerical results are depicted in Fig. \ref{results_pose}, where the labelled plots have a similar meaning as in Fig. \ref{plane_asym}, i.e., 'Symmetric' denotes the solution computed with $\alpha_p = \alpha_\omega = 0.5$, 'Constant' adopts a master-slave resolution of the relative motion task, 'Asymmetric' depicts the solution with update laws \eqref{alpha_law_1} and \eqref{alpha_law_3} and finally 'Barrier' uses the update laws \eqref{alpha_law_2} and \eqref{alpha_law_4}.
We plot the norm of the vector part of $\tilde{\mathcal{Q}}_a$, $||\tilde{\boldsymbol{\epsilon}}_a||$, as well as the angle of the angle-axis representation of $\mathbf{R}_a$, $\tilde{\vartheta}_a$ to illustrate the evolution of the absolute orientation of the cooperative system.
The simulation reinforces the argument that inducing deliberate asymmetries on the execution of a relative motion task can benefit the convergence rate of the absolute motion task in a cooperative manipulation system, while that arbitrary asymmetries might significantly disturb the execution of the absolute task.

\section{Conclusions}
In this article, we consider the problem of the cooperative control of robotic manipulators.
We investigate how, in the context of a CTS task specification, the asymmetric resolution of the relative motion task affects the execution of the absolute motion of the system.
The observation that, in the presence of asymmetries, the relative motion acts as a disturbance to the absolute task leads to the proposal of update laws to the degree of cooperation of the two arms in the relative motion task.
This is a novel approach to the design of cooperative control laws of cooperative manipulation systems.
We show, through numerical simulations, that by deliberately changing the mode of cooperation of the arms, the convergence rate of the absolute motion task can be improved.
The proposed methods can be straightforwardely implemented in a robotic manipulator through, e.g., the employment of the ECTS Jacobian \cite{Park2016} within a suitable kinematic control framework \cite{Kanoun2011, Escande2014}, to account for robot kinematic limitations.

\bibliographystyle{unsrt} 
\bibliography{biblio.bib}
\end{document}